\newtheorem{theorem}{Theorem}[section]
\newtheorem{proposition}[theorem]{Proposition}
\newtheorem{claim}{Claim}
\theoremstyle{definition}
\newtheorem{definition}[theorem]{Definition}
\newtheorem{example}[theorem]{Example}
\theoremstyle{remark}
\numberwithin{equation}{section}
\newcommand{ \norm }[1]{ {\lVert #1 \rVert}_{1} }
\newcommand{ \mysum }{ \sum\nolimits }
\newcommand{ \supp }{ \operatorname{supp} }
\newcommand{ \aff }{ \operatorname{aff} }
\newcommand{ \inter }{ \operatorname{int} }
\newcommand{ \arginf }{ \operatorname{arg\,inf} }
\newcommand{ \myproof }[1]{ \noindent\textit{Proof:} #1  \hfill $\blacktriangle$%
            }
\renewcommand{\qed}{\hfill $ \blacksquare $}
\begin{document}

\title[Information-Geometric Equivalence of Transportation Polytopes]
      {Information-Geometric Equivalence of\\Transportation Polytopes}

\author{Mladen Kova\v{c}evi\'{c}}

\address{Department of Electrical Engineering, 
         Faculty of Technical Sciences, 
         University of Novi Sad, 
         Trg Dositeja Obradovi\'{c}a 6, 21000 Novi Sad, Serbia}
\email{kmladen@uns.ac.rs ({\bf corresponding author}), cet\_ivan@uns.ac.rs, vojin\_senk@uns.ac.rs}
%
\author{Ivan Stanojevi\'c}
\author{Vojin \v{S}enk}

\subjclass[2010]{94A17, 62B10, 62H17, 52B11, 52B12, 54C99}

\date{March 9, 2015.}

\keywords{Information projection, Kullback-Leibler divergence, transportation
polytope, distributions with given marginals, contingency table,
Fr\'{e}chet-Hoeffding bounds.}

\begin{abstract}
 This paper deals with transportation polytopes in the probability simplex
(that is, sets of categorical bivariate probability distributions with
prescribed marginals).
Information projections between such polytopes are studied, and a sufficient
condition is described under which these mappings are homeomorphisms.
\end{abstract}

\maketitle

\section{Preliminaries}

Let $ \Gamma_{n} $ denote the set of probability distributions with alphabet
$ \{ 1, \ldots, n \} $:
\begin{equation}
 \Gamma_{n} = \left\{ (p_i) \in \mathbb{R}^{n} \,:\, p_i \geq 0,\, \mysum_{i} p_i = 1 \right\} .
\end{equation}
The support of a probability distribution $ P = (p_i) $ is denoted by
$ \supp(P) = \{ i : p_i > 0 \} $, and its size by $ |\supp(P)| $.
The support of a set $ \mathcal P $ of probability distributions is defined as
$ \supp({\mathcal P}) = \bigcup\nolimits_{P \in {\mathcal P}} \supp(P) $.
If $ \mathcal P $ is convex, then there must exist $ P \in {\mathcal P} $
with $ \supp(P) = \supp({\mathcal P}) $.
We will also write $ P(i) $ for the masses of $ P $.

Let $ \mathcal{C}(P,Q) $ denote the set of all bivariate probability distributions
with marginals $ P \in \Gamma_{n} $ and $ Q \in \Gamma_{m} $:
\begin{equation}
 \mathcal{C}(P,Q) = \left\{ (s_{i,j}) \in \mathbb{R}^{n\times m} \,:\, 
                     s_{i,j} \geq 0, \, \mysum_{j} s_{i,j} = p_i \,,\, \mysum_{i} s_{i,j} = q_j \right\} .
\end{equation}
Such sets are special cases of the so-called transportation polytopes, and
have been studied extensively in probability, statistics, geometry, combinatorics, etc.\
(see, e.g., \cite{brualdi, fixed}).
In information-theoretic approaches to statistics, and in particular to the
analysis of (multidimensional) contingency tables, a basic role is played by
the so-called \emph{information projections}, see \cite{csiszar_shields} and
the references therein.
This motivates our study, presented in this note, of some formal properties
of information projections (I-projections for short) over domains of the form
$ {\mathcal C}(P,Q) $.
I-projections onto $ {\mathcal C}(P,Q) $ also arise in binary hypothesis
testing, see \cite{polyanskiy}.
Further information-theoretic results (in a fairly different direction)
regarding transportation polytopes can be found in \cite{kovacevic}.

Relative entropy (information divergence, Kullback-Leibler divergence) of the
distribution $ P $ with respect to the distribution $ Q $ is defined by: 
\begin{equation}
\label{eq_KL}
  D(P||Q) = \mysum_{i} p_i \log \frac{ p_i }{ q_i } ,
\end{equation}
with the conventions $ 0 \log \frac{0}{q} = 0 $ and $ p \log \frac{p}{0} = \infty $
for every $ q \geq 0 $, $ p > 0 $, being understood.
The functional $ D $ is nonnegative, equals zero if and only if $ P = Q $,
and is jointly convex in its arguments \cite{csiszar_korner}.

For a probability distribution $ S $ and a set of distributions $ \mathcal T $,
the I-projection \cite{chentsov68, chentsov, csiszarI, topsoe, csiszar_matus} of $ S $ onto
$ \mathcal T $ is defined as the unique minimizer (if it exists) of the functional
$ D(T||S) $ over all $ T \in {\mathcal T} $.
We shall study here I-projections as mappings between sets of the form
$ {\mathcal C}(P,Q) $.
Namely, let $ I_\text{proj} : {\mathcal C}(P_1,Q_1) \to {\mathcal C}(P_2,Q_2) $
be defined by:
\begin{equation}
 I_\text{proj}(S) = \arginf\displaylimits_{ T \in {\mathcal C}(P_2,Q_2) } D(T||S) .
\end{equation}
(Above and in the sequel we assume that $ P_1, P_2 \in \Gamma_n $ and
$ Q_1,Q_2 \in \Gamma_m $.)
The definition is slightly imprecise in that $ I_\text{proj}(S) $ can be
undefined for some $ S \in {\mathcal C}(P_1,Q_1) $, i.e., the domain of
$ I_\text{proj} $ can in fact be a proper subset of $ {\mathcal C}(P_1,Q_1) $.
This is overlooked for notational simplicity.
Another simplification is the omission of the dependence of the functional
$ I_\text{proj} $ on $ P_i, Q_i $; this will not cause any ambiguities.

Note that $ I_\text{proj}(S) $ is undefined only when $ D(T||S) = \infty $
for all $ T \in {\mathcal C}(P_2,Q_2) $.
If $ D(T||S) < \infty $ for some $ T \in {\mathcal C}(P_2,Q_2) $, then
existence of $ I_\text{proj}(S) $ follows easily from the properties of
$ {\mathcal C}(P_2,Q_2) $ and the convexity of $ D(\cdot||\cdot) $
\cite{csiszar_korner}.
Therefore, $ I_\text{proj}(S) $ exists if and only if there exists
$ T \in {\mathcal C}(P_2,Q_2) $ with $ \supp(T) \subseteq \supp(S) $.
Furthermore, it is clear that the I-projection is defined for all
$ S \in {\mathcal C}(P_1,Q_1) $ if and only if it is defined for all
vertices of $ {\mathcal C}(P_1,Q_1) $.

\section{Geometric equivalence of transportation polytopes}

The vertices of transportation polytopes are uniquely determined by their
supports and can be characterized as follows: 
$ U $ is a vertex of $ {\mathcal C}(P_1,Q_1) $ if and only if the associated
bipartite graph $ G_U $ with ``left" nodes $ \{ 1, \ldots, n \} $, ``right"
nodes $ \{ 1, \ldots, m \} $, and edges $ \{ (i,j) : U(i,j) > 0 \} $,
is a forest, i.e., contains no cycles \cite{klee}.
In fact, every face of the polytope $ {\mathcal C}(P_1,Q_1) $ is determined
by its support \cite{brualdi}.
Apart from identifying faces, the condition for two vertices being
adjacent can also be expressed in terms of supports, as can many other
geometric and combinatorial properties of transportation polytopes (see
\cite{deloera} and the references therein).
This motivates the following definition.

\begin{definition}
 We say that the polytopes $ {\mathcal C}(P_1,Q_1) $ and $ {\mathcal C}(P_2,Q_2) $
are \emph{geometrically equivalent} if for every $ S \in {\mathcal C}(P_1,Q_1) $
there exists $ T \in {\mathcal C}(P_2,Q_2) $ with $ \supp(S) = \supp(T) $, and
vice versa.
This is equivalent to saying that for every \emph{vertex} $ U \in {\mathcal C}(P_1,Q_1) $
there exists a \emph{vertex} $ V \in {\mathcal C}(P_2,Q_2) $ with $ \supp(U) = \supp(V) $,
and vice versa.
\hfill $ \blacktriangle $%
\end{definition}

Further justification of the term ``geometrically equivalent", in a certain
information-geometric sense, is given in Theorem \ref{homeomorphism} below.

\begin{example}
 To give an example of two geometrically equivalent transportation polytopes,
consider some $ {\mathcal C}(P_1,Q_1) $ that is generic (nondegenerate) \cite{deloera},
implying that the bipartite graphs defining its vertices are spanning trees, and
assume that $ Q_1 $ has only two masses ($ m = 2 $).
In this case for every vertex $ U \in {\mathcal C}(P_1,Q_1) $, $ G_U $ has $ n + 1 $
edges and therefore necessarily contains edges $ (i, 1) $ and $ (i, 2) $ for
some $ i \in \{ 1, \ldots, n \} $ (Fig.\ \ref{fig_graph}).
\begin{figure}[h]
 \centering
    \includegraphics[width=0.5\columnwidth]{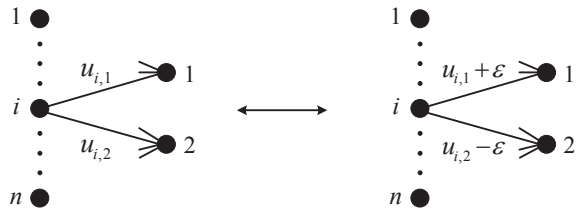}
\caption{Graphs of the vertices of $ {\mathcal C}(P_1,Q_1) $ and $ {\mathcal C}(P_1,Q_2) $.}
\label{fig_graph}
\end{figure}%
Then it is not hard to see that $ {\mathcal C}(P_1,Q_2) $ where
$ Q_2(1) = Q_1(1) + \varepsilon $, $ Q_2(2) = Q_1(2) - \varepsilon $, has
vertices with identical supports as those of $ {\mathcal C}(P_1,Q_1) $, for
small enough $ \varepsilon $.
Thus, $ {\mathcal C}(P_1,Q_1) $ and $ {\mathcal C}(P_1,Q_2) $ are
geometrically equivalent.
\hfill $ \blacktriangle $%
\end{example}

The following claim is straightforward.

\begin{proposition}
 If $ {\mathcal C}(P_1,Q_1) $ and $ {\mathcal C}(P_2,Q_2) $ are geometrically
equivalent, then they are combinatorially equivalent, i.e., they have isomorphic
face lattices.
\hfill \qed
\end{proposition}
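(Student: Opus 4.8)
The plan is to show that the face lattice of a transportation polytope $ \mathcal{C}(P,Q) $ is completely encoded by the family of supports occurring in it, and then to observe that geometric equivalence says precisely that the two polytopes carry the same such family, whence their face lattices are identified.

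First I would set up the correspondence between faces and supports. Using the structural facts recalled above, every nonempty face $ F $ of $ \mathcal{C}(P,Q) $ is determined by its support $ \supp(F) = \bigcup_{S \in F} \supp(S) $, and, $ F $ being convex, some $ S \in F $ has $ \supp(S) = \supp(F) $; conversely, for any $ S \in \mathcal{C}(P,Q) $ the set $ \{ T \in \mathcal{C}(P,Q) : \supp(T) \subseteq \supp(S) \} $ is a face whose support is exactly $ \supp(S) $. Hence $ F \mapsto \supp(F) $ is a bijection from the nonempty faces of $ \mathcal{C}(P,Q) $ onto
\[
 \mathcal{S}(P,Q) \;=\; \{ \supp(S) : S \in \mathcal{C}(P,Q) \} ,
\]
a family of subsets of the index set $ \{ 1, \ldots, n \} \times \{ 1, \ldots, m \} $, which is common to both polytopes since $ P_1,P_2 \in \Gamma_n $ and $ Q_1,Q_2 \in \Gamma_m $.

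Next I would check that this bijection, extended by sending the empty face to $ \emptyset $, is an isomorphism of posets, with $ \mathcal{S}(P,Q) \cup \{ \emptyset \} $ ordered by inclusion. The direction $ F_1 \subseteq F_2 \Rightarrow \supp(F_1) \subseteq \supp(F_2) $ is immediate. For the other direction I would use that each face satisfies $ F = \{ S \in \mathcal{C}(P,Q) : \supp(S) \subseteq \supp(F) \} $ — the coordinates that vanish identically on $ F $ being exactly those outside $ \supp(F) $ — so that $ \supp(F_1) \subseteq \supp(F_2) $ pushes every point of $ F_1 $ into $ F_2 $. This yields a lattice isomorphism between the face lattice of $ \mathcal{C}(P,Q) $ and $ (\mathcal{S}(P,Q) \cup \{ \emptyset \}, \subseteq) $.

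To conclude, geometric equivalence of $ \mathcal{C}(P_1,Q_1) $ and $ \mathcal{C}(P_2,Q_2) $ is by definition the equality $ \mathcal{S}(P_1,Q_1) = \mathcal{S}(P_2,Q_2) $, and composing the two lattice isomorphisms above across this equality produces the desired isomorphism of face lattices, i.e.\ combinatorial equivalence. There is no serious difficulty here; the substantive content is entirely the quoted fact that faces of transportation polytopes are determined by their supports, and the only step needing a little care — the main ``obstacle'' such as it is — is the identity $ F = \{ S : \supp(S) \subseteq \supp(F) \} $ of the second step, which is what ties inclusion of supports to inclusion of faces.
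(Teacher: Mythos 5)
Your argument is correct and is essentially the one the paper has in mind: the paper states the proposition without proof (``straightforward''), relying on the quoted fact that every face of a transportation polytope is determined by its support, and your write-up simply fills in the details of the resulting order-isomorphism between the face lattice and the family of supports ordered by inclusion. In particular your key identity $F=\{S:\supp(S)\subseteq\supp(F)\}$ is valid, since every face is cut out by setting a set of coordinates to zero, so no gap remains.
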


\section{I-projections between transportation polytopes}

It is easy to see from the above discussion that $ I_\text{proj} $ maps the
vertices of $ {\mathcal C}(P_1, Q_1) $ to the vertices of $ {\mathcal C}(P_2, Q_2) $.
In the study of probability distributions with fixed marginals there are two
particularly important vertices called Fr\'{e}chet-Hoeffding (F-H for short)
upper and lower bounds \cite{fixed}.
Both of them are uniquely determined by their supports, namely,
$ \overline{F} \in {\mathcal C}(P, Q) $ is the F-H upper bound for the family
$ {\mathcal C}(P, Q) $ if and only if its associated bipartite graph has no
crossings (when the nodes $ \{1, \ldots, n\} $ on the left and $ \{1, \ldots, m\} $
on the right are drawn in increasing order, see Fig.\ \ref{fig_graph}), while
$ \underline{F} \in {\mathcal C}(P, Q) $ is the F-H lower bound if and only if
its associated graph, after reversing the order of the ``right'' nodes, has no
crossings (in other words, the support of the lower bound for $ {\mathcal C}(P, Q) $
is the same as that of the upper bound for $ {\mathcal C}(P, \tilde{Q}) $, where
$ \tilde{Q} $ is the inverse permutation of $ Q $, i.e., $ \tilde{Q}(i) = Q(m+1-i)$).
We then have:

\begin{proposition}
 Let $ \overline{F}_i, \underline{F}_i \in {\mathcal C}(P_i, Q_i) $, $ i \in \{1, 2\} $,
be the F-H upper and lower bounds.
If the I-projection of $ \overline{F}_1 $ (resp.\ $ \underline{F}_1 $) onto $ {\mathcal C}(P_2, Q_2) $
exists, it is necessarily $ \overline{F}_2 $ (resp.\ $ \underline{F}_2 $).
\hfill \qed
\end{proposition}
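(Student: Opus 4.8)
The plan is to reduce the statement to two ingredients that are already available. The first is that whenever $ I_\text{proj}(S) $ exists it is a minimizer of a \emph{finite} divergence $ D(\cdot||S) $, so by the convention $ p\log\tfrac{p}{0}=\infty $ its support must satisfy $ \supp\big(I_\text{proj}(S)\big)\subseteq\supp(S) $. The second is the combinatorial characterization of the Fr\'echet--Hoeffding upper bound stated just above the proposition: a member of $ {\mathcal C}(P,Q) $ equals $ \overline{F} $ if and only if its associated bipartite graph has no crossings.

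First I would fix $ S=\overline{F}_1 $ and assume $ V:=I_\text{proj}(\overline{F}_1) $ exists. By the first ingredient, $ \supp(V)\subseteq\supp(\overline{F}_1) $. Now $ G_{\overline{F}_1} $ has no crossings (this is the defining property of $ \overline{F}_1 $), and deleting edges cannot create a crossing; hence $ G_V $, being a subgraph of $ G_{\overline{F}_1} $, also has no crossings. Since $ V\in{\mathcal C}(P_2,Q_2) $, the second ingredient forces $ V=\overline{F}_2 $, which is the assertion for the upper bound.

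The lower-bound case I would deduce by the order-reversal correspondence mentioned above: the coordinate permutation $ (s_{i,j})\mapsto(s_{i,m+1-j}) $ is an affine bijection $ {\mathcal C}(P,Q)\to{\mathcal C}(P,\tilde Q) $ that relabels supports by $ j\leftrightarrow m+1-j $ and leaves $ D(\cdot||\cdot) $ unchanged (it merely permutes the coordinates of both arguments). Under it, $ I_\text{proj}\colon{\mathcal C}(P_1,Q_1)\to{\mathcal C}(P_2,Q_2) $ is conjugate to $ I_\text{proj}\colon{\mathcal C}(P_1,\tilde Q_1)\to{\mathcal C}(P_2,\tilde Q_2) $, while each $ \underline{F}_i $ corresponds to the upper bound of $ {\mathcal C}(P_i,\tilde Q_i) $; so the lower-bound statement is precisely the already-proved upper-bound statement for the reversed polytopes.

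The only step carrying real content is the second ingredient, and it is essentially supplied by the discussion preceding the proposition. If one wanted a self-contained argument, I would show that a crossing-free support forces the joint c.d.f.\ of $ V $ to equal $ \min(F_{P_2},F_{Q_2}) $ pointwise --- if it were strictly smaller at some cell $ (i,j) $, then positive mass would be forced both in $ \{(i',j'):i'\le i,\ j'>j\} $ and in $ \{(i',j'):i'>i,\ j'\le j\} $, and any two cells, one from each of these regions, form a crossing --- and then use the fact that the joint c.d.f.\ determines the coupling. Everything else is routine, so I do not expect a genuine obstacle.
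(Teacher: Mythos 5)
Your argument is correct and is essentially the reasoning the paper leaves implicit (the proposition is stated without proof): existence of the I-projection forces $\supp\big(I_\text{proj}(\overline{F}_1)\big)\subseteq\supp(\overline{F}_1)$, a subgraph of a crossing-free bipartite graph is crossing-free, and the stated support characterization of the F-H bounds then identifies the projection as $\overline{F}_2$, with the lower bound handled by the order-reversal correspondence $Q\mapsto\tilde{Q}$ already noted in the text. No gaps; the optional c.d.f.\ argument for the characterization is also sound but not needed beyond what the paper already asserts.
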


\noindent
Another particular case that can be derived directly is that
$ I_\text{proj}(P_1 \times Q_1) = P_2 \times Q_2 $, whenever
$ \supp(P_2) \subseteq \supp(P_1) $ and $ \supp(Q_2) \subseteq \supp(Q_1) $.
To prove this, it is by \cite[Thm 3.2]{csiszar_shields} enough to show that:
\begin{equation}
\label{eq_product}
 D(T||P_1 \times Q_1) = D(P_2 \times Q_2||P_1 \times Q_1) + D(T||P_2 \times Q_2)
\end{equation}
for all $ T \in {\mathcal C}(P_2,Q_2) $, which follows from:
\begin{equation}
  \sum_{i,j} T(i,j) \log \frac{ P_2(i) Q_2(j) }{ P_1(i) Q_1(j) }
    = D(P_2||P_1) + D(Q_2||Q_1)
    = D(P_2 \times Q_2||P_1 \times Q_1) .
\end{equation}

We now restrict our attention to geometrically equivalent polytopes.

\begin{proposition}
 $ {\mathcal C}(P_1,Q_1) $ and $ {\mathcal C}(P_2,Q_2) $ are geometrically
equivalent if and only if every vertex $ U \in {\mathcal C}(P_1,Q_1) $ has an
I-projection onto $ {\mathcal C}(P_2,Q_2) $ and every vertex $ V \in {\mathcal C}(P_2,Q_2) $
has an I-projection onto $ {\mathcal C}(P_1,Q_1) $.
\end{proposition}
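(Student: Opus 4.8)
The plan is to reduce everything to the support-existence criterion for I-projections recorded in Section~1 --- namely that $I_\text{proj}(S)$ from $\mathcal{C}(P_1,Q_1)$ to $\mathcal{C}(P_2,Q_2)$ exists if and only if some $T \in \mathcal{C}(P_2,Q_2)$ has $\supp(T) \subseteq \supp(S)$ --- together with the two structural facts recalled above: a matrix is a vertex of a transportation polytope exactly when its support is a forest, and every face is determined by its support. The implication ``$\Rightarrow$'' is then immediate: if the two polytopes are geometrically equivalent and $U$ is a vertex of $\mathcal{C}(P_1,Q_1)$, the vertex formulation of geometric equivalence supplies a vertex $V \in \mathcal{C}(P_2,Q_2)$ with $\supp(V) = \supp(U)$, so in particular $\supp(V) \subseteq \supp(U)$ and $I_\text{proj}(U)$ exists; the corresponding statement for vertices of $\mathcal{C}(P_2,Q_2)$ follows by symmetry.

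For ``$\Leftarrow$'' the key point, which I expect to be the only non-routine step, is a claim about a single transportation polytope: a vertex $U$ of $\mathcal{C}(P,Q)$ is the only point of $\mathcal{C}(P,Q)$ whose support is contained in $\supp(U)$; that is, if $W \in \mathcal{C}(P,Q)$ and $\supp(W) \subseteq \supp(U)$, then $W = U$. To see this, observe that $\{s \in \mathcal{C}(P,Q) : \supp(s) \subseteq \supp(U)\}$ is a face of $\mathcal{C}(P,Q)$, obtained by turning the inequalities $s_{i,j} \geq 0$, $(i,j) \notin \supp(U)$, into equalities. It contains both $U$ and $W$, and since it contains the vertex $U$ its support equals $\supp(U)$; hence, by the principle that faces are determined by their supports, it coincides with the $0$-dimensional face $\{U\}$, so $W = U$. (Equivalently: a matrix supported on the edge set of a forest is uniquely pinned down by the marginal constraints, which is precisely why vertices are determined by their supports.)

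Granting this, assume every vertex of either polytope has an I-projection onto the other, and let $U$ be a vertex of $\mathcal{C}(P_1,Q_1)$. Existence of $I_\text{proj}(U)$ gives $T \in \mathcal{C}(P_2,Q_2)$ with $\supp(T) \subseteq \supp(U)$; the nonempty face $\{s \in \mathcal{C}(P_2,Q_2) : \supp(s) \subseteq \supp(T)\}$ contains some vertex $V$ of $\mathcal{C}(P_2,Q_2)$, and $\supp(V) \subseteq \supp(T) \subseteq \supp(U)$. Feeding the vertex $V$ into the hypothesis, some $W \in \mathcal{C}(P_1,Q_1)$ has $\supp(W) \subseteq \supp(V) \subseteq \supp(U)$, so the observation of the previous paragraph forces $W = U$ and hence $\supp(U) \subseteq \supp(V) \subseteq \supp(U)$, i.e.\ $\supp(V) = \supp(U)$. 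Thus every vertex of $\mathcal{C}(P_1,Q_1)$ has a vertex of $\mathcal{C}(P_2,Q_2)$ with the same support; running the same argument with the two polytopes interchanged gives the reverse statement, and geometric equivalence follows from its vertex formulation.
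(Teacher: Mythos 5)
Your proof is correct and follows essentially the same route as the paper: both directions reduce to the support criterion for existence of I-projections, and the ``if'' part is a back-and-forth support-inclusion chase ($\supp(W)\subseteq\supp(V)\subseteq\supp(U)$) closed off by the rigidity of vertex supports in a transportation polytope. The only minor difference is that you work with existence alone (picking a vertex inside the face of small support) and spell out, via the faces-determined-by-supports fact, the lemma that the paper invokes in one line, namely that no point of the polytope other than $U$ itself can have support contained in that of a vertex $U$.
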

\begin{proof}
 The ``only if'' part is straightforward. For the ``if'' part, take some vertex
$ U \in {\mathcal C}(P_1,Q_1) $; let its I-projection onto $ {\mathcal C}(P_2,Q_2) $
be $ U^* $, and let the I-projection of $ U^* $ onto $ {\mathcal C}(P_1,Q_1) $
be $ U' $. We know that $ \supp(U') \subseteq \supp(U^*) \subseteq \supp(U) $,
but in fact none of the inclusions can be strict because there can be no two
vertices of a transportation polytope such that the support of one of them
contains the support of the other.
\end{proof}

The main result that we wish to report in this note is stated in the
following theorem.
It is a direct consequence of the propositions proved subsequently.

\begin{theorem}
\label{homeomorphism}
 If $ {\mathcal C}(P_1,Q_1) $ and $ {\mathcal C}(P_2,Q_2) $ are geometrically
equivalent, then they are homeomorphic under information projections.
\hfill \qed
\end{theorem}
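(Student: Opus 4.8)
The plan is to show that $I_\text{proj}:{\mathcal C}(P_1,Q_1)\to{\mathcal C}(P_2,Q_2)$ is a continuous bijection with continuous inverse, and the natural route is to exhibit the ``reverse'' I-projection $I_\text{proj}':{\mathcal C}(P_2,Q_2)\to{\mathcal C}(P_1,Q_1)$ as its inverse. Under geometric equivalence, every $S\in{\mathcal C}(P_1,Q_1)$ has some $T\in{\mathcal C}(P_2,Q_2)$ with $\supp(T)=\supp(S)$, so $D(T||S)<\infty$ and hence $I_\text{proj}(S)$ is always defined (and symmetrically for $I_\text{proj}'$); thus both maps are genuinely total. So there will be three propositions to isolate: (i) $I_\text{proj}$ preserves supports, i.e.\ $\supp(I_\text{proj}(S))=\supp(S)$; (ii) $I_\text{proj}$ and $I_\text{proj}'$ are mutually inverse; (iii) $I_\text{proj}$ is continuous.

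For (i), I would argue that $\supp(I_\text{proj}(S))\subseteq\supp(S)$ always (since $D$ would be infinite otherwise), and then use the variational/Pythagorean characterization of the I-projection together with geometric equivalence: there is a $T\in{\mathcal C}(P_2,Q_2)$ with $\supp(T)=\supp(S)$, and one checks that any $T^\ast$ with strictly smaller support than $\supp(S)$ cannot be the minimizer — the standard trick is to move an infinitesimal amount of mass from $T^\ast$ toward such a full-support $T$ along the segment $(1-\lambda)T^\ast+\lambda T$ and compute that the directional derivative of $D(\cdot||S)$ at $T^\ast$ is $-\infty$ in that direction (because of the $\log$ of a coordinate that is $0$ in $T^\ast$ but positive in $S$), contradicting minimality. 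This forces $\supp(I_\text{proj}(S))=\supp(S)$. For (ii), given $S$, let $S^\ast=I_\text{proj}(S)$ and $S'=I_\text{proj}'(S^\ast)$; by (i) applied twice, $\supp(S')=\supp(S^\ast)=\supp(S)$. Now invoke the Pythagorean identity of \cite[Thm 3.2]{csiszar_shields}: $D(S'||S^\ast)=D(S||S^\ast)+D(S'||S)$ would need the analogous transfer argument; more cleanly, since $S$ and $S'$ lie in the same linear family ${\mathcal C}(P_1,Q_1)$ and have the same support, and $S^\ast$ is their common ``reference'' whose projection back into that family is characterized by a Pythagorean relation, one gets $D(S||S^\ast)=D(S'||S^\ast)+D(S||S')$ and, reversing roles, equality $S=S'$. (One may instead note that on a fixed-support stratum the I-projection of $S^\ast$ into ${\mathcal C}(P_1,Q_1)$ is the unique $S'$ with $\supp(S')=\supp(S^\ast)$ satisfying the marginal constraints, and an explicit computation — using that the constraint family is linear in the probabilities — identifies it with $S$.) Hence $I_\text{proj}'\circ I_\text{proj}=\mathrm{id}$ and symmetrically $I_\text{proj}\circ I_\text{proj}'=\mathrm{id}$, so $I_\text{proj}$ is a bijection.

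For (iii), continuity, I would fix any $S_0\in{\mathcal C}(P_1,Q_1)$ and take $S_k\to S_0$. The images $I_\text{proj}(S_k)$ lie in the compact polytope ${\mathcal C}(P_2,Q_2)$, so it suffices to show every convergent subsequence has limit $I_\text{proj}(S_0)$; pass to such a subsequence with $I_\text{proj}(S_k)\to T_\ast$. The subtlety is that $D(\cdot||\cdot)$ is only lower semicontinuous and can jump to $+\infty$, so I would split according to support: for $k$ large, $\supp(S_k)\supseteq\supp(S_0)$ (masses are continuous and only the positive ones matter for the limit), and by (i), $\supp(I_\text{proj}(S_k))=\supp(S_k)$, so $\supp(T_\ast)\subseteq\supp(S_0)$. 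Then for any fixed $T\in{\mathcal C}(P_2,Q_2)$ with $\supp(T)\subseteq\supp(S_0)$ we have $D(T||S_k)\to D(T||S_0)$ (all logs stay finite), while $D(T_\ast||S_0)\le\liminf D(I_\text{proj}(S_k)||S_k)\le\liminf D(T||S_k)=D(T||S_0)$ by lower semicontinuity and minimality; taking $T=I_\text{proj}(S_0)$ gives $D(T_\ast||S_0)\le D(I_\text{proj}(S_0)||S_0)$, and uniqueness of the minimizer forces $T_\ast=I_\text{proj}(S_0)$. The same argument applied to $I_\text{proj}'$ shows the inverse is continuous, completing the proof that $I_\text{proj}$ is a homeomorphism.

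I expect the main obstacle to be step (i) — pinning down that the I-projection is support-\emph{preserving} and not merely support-\emph{non-increasing} — since everything else (bijectivity and continuity) rests on it; this is exactly where geometric equivalence is used, via the existence of an equal-support $T$ on the target side and the $-\infty$ directional-derivative argument that rules out any support drop. A secondary technical point is handling the lower semicontinuity of divergence carefully in step (iii), but the subsequence-plus-compactness device above neutralizes it.
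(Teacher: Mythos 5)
Your overall architecture is the same as the paper's: prove continuity of both I-projections and show that the reverse I-projection inverts the forward one (the paper's Propositions \ref{continuous} and \ref{inverse}). Your step (i) (support preservation via the $-\infty$ directional-derivative argument, which is the content of the cited Thm 3.1 of Csisz\'ar--Shields) and your step (iii) (continuity via compactness plus joint lower semicontinuity of $D$ and convergence of $D(T||S_k)$ for fixed admissible $T$) are both sound; (iii) is even a legitimate alternative to the paper's Pythagorean-identity proof of continuity. The genuine gap is step (ii), which is exactly the step the paper calls counterintuitive and settles by citing \cite[Lemma 4.2]{csiszar_shields} on I-projections onto \emph{translated} linear families. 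From the identities you actually have, only $D(S||S^*)=D(S'||S^*)+D(S||S')$ is legitimate (since $S'$ is the I-projection of $S^*$ onto $\mathcal{C}(P_1,Q_1)$ and $S$ lies in that family with $\supp(S)=\supp(S^*)$); this yields $D(S||S^*)\ge D(S'||S^*)$ and nothing more. ``Reversing roles'' is not available: the companion identity $D(S'||S)=D(S^*||S)+D(S'||S^*)$ would be the Pythagorean identity for the projection onto $\mathcal{C}(P_2,Q_2)$ evaluated at $S'$, which does not belong to $\mathcal{C}(P_2,Q_2)$. Your parenthetical ``cleaner'' route is simply false: the set of elements of $\mathcal{C}(P_1,Q_1)$ with support equal to $\supp(S^*)$ is the relative interior of a face, which is a singleton only when that support is a forest (i.e., a vertex); for any non-vertex $S$ there are infinitely many such $S'$, so ``the unique $S'$ with $\supp(S')=\supp(S^*)$ satisfying the marginal constraints'' identifies nothing.

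What is missing is precisely the ``transfer argument'' you name but never carry out, and it is the real content of the step. Concretely: since $S^*=I_\text{proj}(S)$ with $\supp(S^*)=\supp(S)=:\Sigma$, the Pythagorean identity says the linear functional $X\mapsto\sum_{i,j}X(i,j)\log\frac{S^*(i,j)}{S(i,j)}$ is constant on the face $\{T\in\mathcal{C}(P_2,Q_2):\supp(T)\subseteq\Sigma\}$. By geometric equivalence both this face and the face $\{U\in\mathcal{C}(P_1,Q_1):\supp(U)\subseteq\Sigma\}$ contain points of full support $\Sigma$, hence both have the same direction space, namely all arrays supported in $\Sigma$ with zero row and column sums; a linear functional constant on one face therefore annihilates this common space and is constant on the other face as well. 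Evaluating at $U=S$ gives $D(U||S^*)=D(S||S^*)+D(U||S)$ for every $U\in\mathcal{C}(P_1,Q_1)$ with $\supp(U)\subseteq\Sigma$, and since $\supp(U)\not\subseteq\Sigma$ forces $D(U||S^*)=\infty$, this shows $S$ itself is the I-projection of $S^*$ onto $\mathcal{C}(P_1,Q_1)$, i.e.\ $S'=S$ (equivalently, one can invoke the log-linear characterization $\log\frac{S^*(i,j)}{S(i,j)}=a_i+b_j$ on $\Sigma$, which is symmetric in $S$ and $S^*$, or simply cite Lemma 4.2 as the paper does). Until this is supplied, your bijectivity and the continuity of the inverse are unsupported. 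Note also that the paper gives an independent, self-contained proof of bijectivity (injectivity by a perturbation argument, surjectivity by induction over faces together with a connectedness claim), which would be another way to repair the argument if you prefer to avoid the translated-families lemma.
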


We first give a simple proof of continuity of information projections by
using a well known identity obeyed by these functionals.
See also \cite{cont} for a slightly different proof (obtained for the more
general notion of $ f $-projections). The assumed topology is the one induced
by the $ \ell_1 $ norm, and in what follows $ P_n \to P $ means that
$ \norm{ P_n - P } \equiv \mysum_i | P_n(i) - P(i) | \to 0 $.

\begin{proposition}
\label{continuous}
 $ I_\textnormal{proj} $ is continuous in its domain.
\end{proposition}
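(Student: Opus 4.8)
The plan is to exploit compactness of $ {\mathcal C}(P_2,Q_2) $ together with the Pythagorean identity for I-projections onto a linear family. Since $ {\mathcal C}(P,Q) $ is the intersection of a probability simplex with an affine subspace, the identity $ D(T||R) = D(T||T^{*}) + D(T^{*}||R) $ holds whenever the left-hand side is finite, where $ T^{*} $ denotes the I-projection of $ R $ onto $ {\mathcal C}(P_2,Q_2) $ and $ T $ is an arbitrary element of $ {\mathcal C}(P_2,Q_2) $ \cite[Thm 3.2]{csiszar_shields}. Fix an arbitrary $ S $ in the domain of $ I_\text{proj} $ and a sequence $ S_k \to S $ within the domain, and put $ T = I_\text{proj}(S) $, $ T_k = I_\text{proj}(S_k) $. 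Existence of $ T $ forces $ D(T||S) < \infty $ and hence $ \supp(T) \subseteq \supp(S) $; moreover, since $ S_k \to S $ in $ \ell_1 $, a coordinate positive in $ S $ is eventually positive in $ S_k $, so $ \supp(S) \subseteq \supp(S_k) $ for all large $ k $, giving $ D(T||S_k) < \infty $ and therefore $ D(T_k||S_k) \leq D(T||S_k) < \infty $ for all large $ k $. Because $ {\mathcal C}(P_2,Q_2) $ is compact, it is enough to show that every subsequential limit of $ (T_k) $ coincides with $ T $.

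So let $ T_{k_l} \to T' \in {\mathcal C}(P_2,Q_2) $. I would first invoke the identity with $ R = S_{k_l} $, $ T^{*} = T_{k_l} $ and the fixed $ T $:
\[
 D(T||S_{k_l}) \;=\; D(T||T_{k_l}) + D(T_{k_l}||S_{k_l}) .
\]
As $ l \to \infty $, the left-hand side converges to $ D(T||S) $ --- it is a finite sum over $ (i,j) \in \supp(T) $ of terms $ T(i,j)\log\frac{T(i,j)}{S_{k_l}(i,j)} $, and for each such $ (i,j) $ one has $ S(i,j) > 0 $ with $ S_{k_l}(i,j) \to S(i,j) $ --- while joint lower semicontinuity of $ D(\cdot||\cdot) $ yields $ \liminf_l D(T||T_{k_l}) \geq D(T||T') $ and $ \liminf_l D(T_{k_l}||S_{k_l}) \geq D(T'||S) $. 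Passing to the $ \liminf $ in the displayed equality therefore gives $ D(T||S) \geq D(T||T') + D(T'||S) $, so in particular $ D(T'||S) < \infty $. Applying the identity a second time, now with $ R = S $, $ T^{*} = T = I_\text{proj}(S) $ and the member $ T' $ of $ {\mathcal C}(P_2,Q_2) $, gives $ D(T'||S) = D(T'||T) + D(T||S) $. Substituting into the previous inequality yields $ D(T||S) \geq D(T||T') + D(T'||T) + D(T||S) $, whence $ D(T||T') + D(T'||T) \leq 0 $ and thus $ T' = T $. Since $ (T_k) $ lies in the compact set $ {\mathcal C}(P_2,Q_2) $ and all of its convergent subsequences have limit $ T $, we conclude $ T_k \to T $; as $ S $ was arbitrary, $ I_\text{proj} $ is continuous on its domain.

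The conceptual skeleton --- reduce to subsequences by compactness, extract the limiting inequality $ D(T||S) \geq D(T||T') + D(T'||S) $ from one Pythagorean decomposition and lower semicontinuity, then collapse it against the exact Pythagorean decomposition of $ D(T'||S) $ around $ T $ --- is routine. The point I expect to require genuine care is the support bookkeeping: one must track $ \supp(T) \subseteq \supp(S) \subseteq \supp(S_{k_l}) $ for large $ l $, both so that the equality form of the identity is applied legitimately (avoiding an $ \infty - \infty $) and so that $ D(T||S_{k_l}) \to D(T||S) $ as true convergence rather than merely a one-sided $ \liminf $ bound, which is exactly what is needed on the left-hand side of the displayed identity. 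Should one prefer, the Pythagorean \emph{inequality} $ D(T||R) \geq D(T||T^{*}) + D(T^{*}||R) $, valid for every convex $ {\mathcal C}(P_2,Q_2) $, can be substituted throughout with no essential change to the argument.
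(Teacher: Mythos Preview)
Your argument is correct and follows essentially the same route as the paper: extract a subsequential limit by compactness of $\mathcal{C}(P_2,Q_2)$, pass to the limit in the Pythagorean identity for the linear family, and conclude that the limit must be the I-projection of $S$. The only cosmetic differences are that the paper invokes continuity of $D$ in its second argument and evaluates the limiting identity at two test points (the limit $R$ and the projection $S^*$) to obtain $D(S^*\|S)\geq D(R\|S)$, whereas you use joint lower semicontinuity to get the inequality $D(T\|S)\geq D(T\|T')+D(T'\|S)$ and then collapse it against a second Pythagorean decomposition $D(T'\|S)=D(T'\|T)+D(T\|S)$; your explicit support bookkeeping is also a bit more careful than the paper's.
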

\begin{proof}
 Let $ S_n, S \in {\mathcal C}(P_1,Q_1) $ with $ S_n \to S $.
Let $ S^* = I_\text{proj}(S) $, $ S_n^* = I_\text{proj}(S_n) $; we need to
show that $ S_n^* \to S^* $.
Since $ {\mathcal C}(P_2,Q_2) $ is compact, $ S_n^* $ must have a convergent
subsequence $ S_{k_n}^* $ ($ k_n $ is an increasing function in $ n $).
Suppose that $ S_{k_n}^* \to R $ for some $ R \in {\mathcal C}(P_2,Q_2) $.
The set of all distributions $ T \in {\mathcal C}(P_2,Q_2) $ with
$ \supp(T) \subseteq \supp(S_{k_n}) $ is a linear family\footnote{\,A linear
family of (two-dimensional) probability distributions is a set of the form
$ \big\{ T : \sum_{i,j} T(i,j) f_k(i,j) = \alpha_k \big\} $,
where $ f_k $, $ 1 \leq k \leq K $, are real functions defined on the alphabet
of the distributions $ T $, and $ \alpha_k $ are real numbers.} \cite{csiszar_shields},
and therefore the following identity holds \cite[Thm 3.2]{csiszar_shields}:
\begin{equation}
 D(T||S_{k_n}) = D(S_{k_n}^*||S_{k_n}) + D(T||S_{k_n}^*)
\end{equation}
for all $ T \in {\mathcal C}(P_2,Q_2) $ with $ \supp(T) \subseteq \supp(S_{k_n}) $.
Taking the limit when $ n \to \infty $ and using the fact that $ D(\cdot||\cdot) $
is continuous in its second argument (in the finite alphabet case), we obtain:
\begin{equation}
\label{eq_limDpom}
 D(T||S) = \lim_{n \to \infty} D(S_{k_n}^*||S_{k_n}) + D(T||R) .
\end{equation}
Evaluating \eqref{eq_limDpom} at $ T = R $ we conclude that
$ \lim_{n \to \infty} D(S_{k_n}^*||S_{k_n}) = D(R||S) $.
Substituting this back into \eqref{eq_limDpom} and evaluating at $ T = S^* $ we get:
\begin{equation}
 D(S^*||S) = D(R||S) + D(S^*||R) ,
\end{equation}
wherefrom $ D(S^*||S) \geq D(R||S) $. But since $ S^* $ is by assumption
the unique minimizer of $ D(\cdot||S) $ over $ {\mathcal C}(P_2,Q_2) $, we
must have $ R = S^* $.
\end{proof}

\begin{proposition}
\label{bijection}
 Let $ {\mathcal C}(P_1,Q_1) $ and $ {\mathcal C}(P_2,Q_2) $ be geometrically
equivalent.
Then $ I_\textnormal{proj} $ is a bijection\footnote{\,Note that this follows from a
stronger statement given in Proposition \ref{inverse}, but we also give here
a direct proof that we believe is interesting in its own right.}.
\end{proposition}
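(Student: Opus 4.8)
The plan is to exploit the explicit ``log-linear'' (product) form that I-projections onto families with prescribed marginals must have, and to reduce both injectivity and surjectivity of $ I_\textnormal{proj} $ to a single uniqueness statement.

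First I would pin down the shape of $ I_\textnormal{proj}(S) $. Geometric equivalence removes all domain issues: by the preceding proposition $ I_\textnormal{proj} $ is defined on all of $ \mathcal{C}(P_1,Q_1) $, and, comparing the row/column supports of a full-support point of one polytope with those of its support-matching partner in the other, one sees $ \supp(P_1) = \supp(P_2) $ and $ \supp(Q_1) = \supp(Q_2) $. Now fix $ S \in \mathcal{C}(P_1,Q_1) $ and put $ A = \supp(S) $. By definition of geometric equivalence there is a $ T_0 \in \mathcal{C}(P_2,Q_2) $ with $ \supp(T_0) = A $, so the linear family $ \mathcal{L} $ of probability distributions on $ \{1,\dots,n\} \times \{1,\dots,m\} $ with support in $ A $ and marginals $ P_2, Q_2 $ is nonempty and contains a point of full support $ A $. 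Since $ S $ is strictly positive on $ A $, the I-projection of $ S $ onto $ \mathcal{C}(P_2,Q_2) $ coincides with its I-projection onto $ \mathcal{L} $, and by the standard characterization of I-projections onto linear families \cite{csiszar_shields,csiszar_matus} the latter has the form $ I_\textnormal{proj}(S)(i,j) = a_i\,b_j\,S(i,j) $ for positive reals $ a_i, b_j $. Comparing with $ T_0 $ shows that the projection cannot kill any coordinate of $ A $, so in fact
\begin{equation}
 \supp\big(I_\textnormal{proj}(S)\big) = \supp(S) \qquad \text{for every } S .
\end{equation}
This support-preservation is the one place where geometric equivalence is genuinely used, and I expect verifying it carefully --- equivalently, checking that the Lagrange multipliers of the projection cannot degenerate --- to be the main technical point.

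Next I would isolate the uniqueness lemma: if $ R_1, R_2 \in \mathcal{C}(P,Q) $ have the same support $ A $ and $ R_1(i,j) = c_i\,d_j\,R_2(i,j) $ on $ A $ with $ c_i, d_j > 0 $, then $ R_1 = R_2 $. Indeed, since $ R_1 $ and $ R_2 $ share the marginals the normalizing constant is forced to be $ 1 $, so $ R_1 $ lies in the log-linear family through $ R_2 $ with respect to the row- and column-sum statistics while also lying in the linear family of distributions with marginals $ P, Q $ supported in $ A $; a linear family meets such a log-linear family in at most one point \cite{csiszar_shields,csiszar_matus}, whence $ R_1 = R_2 $. (Alternatively one can argue directly by a short scaling computation that uses only the equality of the marginals.)

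Finally the pieces combine. For injectivity: if $ I_\textnormal{proj}(S_1) = I_\textnormal{proj}(S_2) $, then by support-preservation $ \supp(S_1) = \supp(S_2) =: A $, and dividing the two product-form expressions gives $ S_1(i,j) = c_i\,d_j\,S_2(i,j) $ on $ A $ with $ c_i, d_j > 0 $; since $ S_1 $ and $ S_2 $ have the common marginals $ P_1, Q_1 $, the lemma forces $ S_1 = S_2 $. For surjectivity I would use that geometric equivalence is symmetric, so $ I_\textnormal{proj}' \colon \mathcal{C}(P_2,Q_2) \to \mathcal{C}(P_1,Q_1) $ is likewise everywhere defined and of product form; then for $ S \in \mathcal{C}(P_1,Q_1) $ the composite $ S' = I_\textnormal{proj}'\big(I_\textnormal{proj}(S)\big) $ again has the form $ S'(i,j) = c_i\,d_j\,S(i,j) $ on $ \supp(S) $ and the same marginals $ P_1, Q_1 $ as $ S $, so $ S' = S $ by the lemma. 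Hence $ I_\textnormal{proj}' \circ I_\textnormal{proj} = \mathrm{id} $ and, symmetrically, $ I_\textnormal{proj} \circ I_\textnormal{proj}' = \mathrm{id} $, so $ I_\textnormal{proj} $ is a bijection (with inverse $ I_\textnormal{proj}' $, anticipating Proposition~\ref{inverse}). One could instead obtain surjectivity topologically, from injectivity together with continuity (Proposition~\ref{continuous}) and the fact that the two polytopes are combinatorially equivalent of the same dimension; but that route again relies on support-preservation, which is exactly what makes $ I_\textnormal{proj} $ send the relative boundary of $ \mathcal{C}(P_1,Q_1) $ into that of $ \mathcal{C}(P_2,Q_2) $.
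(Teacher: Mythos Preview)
Your proposal is correct but follows a different route from the paper's direct proof. The paper handles injectivity via the Pythagorean identity: subtracting $D(T||S_1) = D(S^*||S_1) + D(T||S^*)$ from its analogue for $S_2$ and then perturbing $T$ in the direction $\epsilon = S_2 - S_1$ yields $\sum_{i,j}\epsilon(i,j)\log\frac{S_2(i,j)}{S_1(i,j)} = 0$, which is impossible since every summand is nonnegative and not all vanish. Surjectivity is then obtained by induction over the face lattice, using that a continuous injection from a compact space is a homeomorphism onto its image together with a connectedness argument to pass from $\partial\mathcal{F}_k$ to $\inter(\mathcal{F}_k)$ --- essentially the topological alternative you sketch at the end. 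Your argument instead makes the product form $I_\text{proj}(S)(i,j) = a_i b_j S(i,j)$ explicit and reduces both directions to the single fact that a linear family and the associated exponential family meet in at most one point; this is shorter and, as you observe, already delivers Proposition~\ref{inverse} en route, which is precisely why the footnote concedes that Proposition~\ref{bijection} follows from Proposition~\ref{inverse}. The paper's longer argument buys a self-contained proof from the Pythagorean identity and point-set topology alone, without invoking the exponential-family characterization; yours buys economy and the stronger conclusion in a single pass.
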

\begin{proof}
 1.) $ I_\text{proj} $ is injective (one-to-one).
Observe that every distribution $ S \in {\mathcal C}(P_1,Q_1) $ maps to a distribution
with the same support, i.e., $ \supp(I_\text{proj}(S)) = \supp(S) $; this follows
from \cite[Thm 3.1]{csiszar_shields} (that such a distribution exists follows
from geometric equivalence of $ {\mathcal C}(P_1,Q_1) $ and $ {\mathcal C}(P_2,Q_2) $).
We conclude that a vertex $ V \in {\mathcal C}(P_1,Q_1) $ maps to the corresponding
vertex $ V^* \in {\mathcal C}(P_2,Q_2) $ with $ \supp(V^*) = \supp(V) $, and no other
distribution from $ {\mathcal C}(P_1,Q_1) $ can map to $ V^* $ because vertices are
uniquely determined by their supports.
Assume now, for the sake of contradiction, that
$ I_\text{proj}(S_1) = I_\text{proj}(S_2) = S^* $, where
$ S_1, S_2 \in {\mathcal C}(P_1,Q_1) $ are not vertices.
As commented above, we necessarily have $ \supp(S_1) = \supp(S_2) = \supp(S^*) $.
Furthermore, by \cite[Thm 3.2]{csiszar_shields} we have:
\begin{equation}
 \begin{aligned}
  D(T||S_1) &= D(S^*||S_1) + D(T||S^*) \\
  D(T||S_2) &= D(S^*||S_2) + D(T||S^*)
 \end{aligned}
\end{equation}
and by subtracting these equations we get:
\begin{equation}
\label{eq_injecdiff}
 D(T||S_1) - D(T||S_2) = D(S^*||S_1) - D(S^*||S_2)
\end{equation}
for all $ T \in {\mathcal C}(P_2,Q_2) $ with $ \supp(T) \subseteq \supp(S^*) $.
By writing out all terms of \eqref{eq_injecdiff} we obtain:
\begin{equation}
\label{eq_TS}
 \sum_{i,j} T(i,j) \log \frac{ S_2(i,j) }{ S_1(i,j) }
   = \sum_{i,j} S^*(i,j) \log \frac{ S_2(i,j) }{ S_1(i,j) } .
\end{equation}
Define $ \epsilon(i,j) = S_2(i,j) - S_1(i,j) $.
We can evaluate \eqref{eq_TS} at $ T = S^* + \delta \epsilon $ for some
small enough constant $ \delta > 0 $, because
$ \sum_i \epsilon(i,j) = \sum_j \epsilon(i,j) = 0 $ and $ \supp(S^*) = \supp(S_1) = \supp(S_2)$,
which ensures that $ S^* + \delta \epsilon \in {\mathcal C}(P_2,Q_2) $ and
$ \supp(T) \subseteq \supp(S^*) $.
This gives:
\begin{equation}
\label{eq_injecfinal}
 \sum_{i,j} \epsilon(i,j) \log \frac{ S_2(i,j) }{ S_1(i,j) } = 0 .
\end{equation}
But $ \epsilon(i,j) $ and $ \log \frac{ S_2(i,j) }{ S_1(i,j) } $ always
have the same sign, which means that the left-hand side of \eqref{eq_injecfinal}
is strictly positive and cannot equal zero, a contradiction.

 2.) $ I_\text{proj} $ is surjective (onto).
Let $ {\mathcal F}_k \subseteq {\mathcal C}(P_1,Q_1) $ be a $ k $-dimensional
face of $ {\mathcal C}(P_1,Q_1) $, $ k \leq (n - 1)(m - 1) $, determined
uniquely by its support $ \supp({\mathcal F}_k) $, namely,
$ {\mathcal F}_k = \{ S \in {\mathcal C}(P_1,Q_1) : 
                      \supp(S) \subseteq \supp({\mathcal F}_k) \} $.
We can regard $ {\mathcal F}_k $ as a convex and compact subset of its affine
hull, denoted $ \aff({\mathcal F}_k) $.
When regarded this way, the interior of $ {\mathcal F}_k $ is nonempty and
consists of distributions with full support, namely,
$ \inter({\mathcal F}_k) = \{ S \in {\mathcal F}_k : 
                              \supp(S) = \supp({\mathcal F}_k) \} $.
The boundary of $ {\mathcal F}_k $, denoted $ \partial {\mathcal F}_k $, is
the union of the proper faces of $ {\mathcal F}_k $.
Distributions in $ \partial {\mathcal F}_k $ have supports strictly contained
in $ \supp({\mathcal F}_k) $.
Now, let $ {\mathcal F}_k^* $ be the corresponding face of $ {\mathcal C}(P_2,Q_2) $
with $ \supp({\mathcal F}_k^*) = \supp({\mathcal F}_k) $.
We know that $ I_\text{proj} $ maps distributions from $ {\mathcal F}_k $ to
distributions from $ {\mathcal F}_k^* $
($ I_\text{proj}({\mathcal F}_k) \subseteq {\mathcal F}_k^* $) because, for
$ S \in {\mathcal F}_k $, $ D(\cdot||S) $ is finite only over $ {\mathcal F}_k^* $.
We will show that in fact $ I_\text{proj}({\mathcal F}_k) = {\mathcal F}_k^* $,
i.e., that $ I_\text{proj} $ is surjective over $ {\mathcal F}_k $, which will
establish the desired claim.
The proof is by induction on the dimension of the faces ($ k $).
We first observe, again by analyzing supports, that
$ I_\text{proj}(\inter({\mathcal F}_k)) \subseteq \inter({\mathcal F}_k^*) $, and
$ I_\text{proj}(\partial {\mathcal F}_k) \subseteq \partial {\mathcal F}_k^* $
(in fact, the image of every proper face of $ {\mathcal F}_k $ is contained in
the corresponding face of $ {\mathcal F}_k^* $ having the same support).
We can now start the induction.
Assume that $ I_\text{proj} $ is surjective over every face of
$ {\mathcal C}(P_1,Q_1) $ of dimension $ < k $
(we know that it is surjective over zero-dimensional faces, i.e., vertices,
and so the induction is justified).
Therefore, the assumption is that
$ I_\text{proj}(\partial {\mathcal F}_k) = \partial {\mathcal F}_k^* $,
and we need to show that also
$ I_\text{proj}(\inter({\mathcal F}_k)) = \inter({\mathcal F}_k^*) $.
We will use the following simple claim.

\begin{center}
\begin{minipage}{0.92\textwidth}
\begin{claim}
\label{claim}
 Let $ A $ and $ B $ be open sets (in arbitrary topological space) with
$ A \subseteq B $, and $ B $ connected.
If $ A $ and $ B $ have the same boundaries ($ \partial A = \partial B $)
then they are equal.
\end{claim}
 \myproof{Assume that $ A \neq B $, and let $ x \in B \setminus A $.
There must exist a neighborhood of $ x $, denoted $ V(x) $, such that
$ V(x) \subseteq B \setminus A $ for otherwise we would have that
$ x \in \partial A = \partial B $ which is impossible since $ B $ is open
and cannot contain its boundary points.
This proves that $ B \setminus A $ is open and hence $ B $ is a union of
two disjoint open sets ($ A $ and $ B \setminus A $).
This is a contradiction because $ B $ is connected.}
\end{minipage}
\end{center}
\noindent
We know that
$ I_\text{proj}(\inter({\mathcal F}_k)) \subseteq \inter({\mathcal F}_k^*) $,
and that $ \inter({\mathcal F}_k^*) $ is open (in $ \aff({\mathcal F}_k^*) $)
and connected.
Hence, to prove that
$ I_\text{proj}(\inter({\mathcal F}_k)) = \inter({\mathcal F}_k^*) $
(by using Claim \ref{claim}), we need to show that
$ I_\text{proj}(\inter({\mathcal F}_k)) $ is open, and that
$ \partial I_\text{proj}(\inter({\mathcal F}_k)) =
\partial \inter({\mathcal F}_k^*) \equiv \partial {\mathcal F}_k^* $.
Since $ I_\text{proj} $ is an injective and continuous function from a
compact to a metric space, it is a homeomorphism onto its image
\cite[Thm 7.8, Ch I]{bredon}.
In particular, it is both open and closed.
Therefore, $ I_\text{proj}(\inter({\mathcal F}_k)) $ is indeed open in
$ \aff({\mathcal F}_k^*) $.
Furthermore, $ I_\text{proj}({\mathcal F}_k) = 
I_\text{proj}(\inter({\mathcal F}_k)) \cup \partial {\mathcal F}_k^* $ is
closed in $ \aff({\mathcal F}_k^*) $, which implies that the boundary of
$ I_\text{proj}(\inter({\mathcal F}_k)) $ is contained in
$ \partial {\mathcal F}_k^* $.
But in fact it must be equal to $ \partial {\mathcal F}_k^* $ because any
$ T^* \in \partial {\mathcal F}_k^* $ is a limit point of
$ I_\text{proj}(\inter({\mathcal F}_k)) $.
Namely, $ T^* $ must be the image of some $ T \in \partial {\mathcal F}_k $
by the induction hypothesis, and if $ T_n \to T $, $ T_n \in \inter({\mathcal F}_k) $,
then $ I_\text{proj}(T_n) \to T^* $ by continuity.
The proof is complete.
\end{proof}

In the above proof we used the fact that the inverse of the I-projection
from $ {\mathcal C}(P_1,Q_1) $ to $ {\mathcal C}(P_2,Q_2) $ is continuous.
The following proposition precisely identifies this inverse.
The statement is somewhat counterintuitive due to the asymmetry of the
functional $ D(\cdot||\cdot) $.

\begin{proposition}
\label{inverse}
 Let $ {\mathcal C}(P_1,Q_1) $ and $ {\mathcal C}(P_2,Q_2) $ be geometrically
equivalent.
Then the inverse of the I-projection from $ {\mathcal C}(P_1,Q_1) $ to
$ {\mathcal C}(P_2,Q_2) $ is the I-projection from $ {\mathcal C}(P_2,Q_2) $
to $ {\mathcal C}(P_1,Q_1) $.
\end{proposition}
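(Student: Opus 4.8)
The plan is to write $I_{12} : \mathcal{C}(P_1,Q_1) \to \mathcal{C}(P_2,Q_2)$ and $I_{21} : \mathcal{C}(P_2,Q_2) \to \mathcal{C}(P_1,Q_1)$ for the two I-projections (both defined on all of the respective polytope, by geometric equivalence) and to show that $I_{21}(I_{12}(S)) = S$ for every $S \in \mathcal{C}(P_1,Q_1)$; the reverse composition then follows by interchanging the indices $1$ and $2$, and the two identities together say exactly that $I_{21} = I_{12}^{-1}$. So I would fix $S \in \mathcal{C}(P_1,Q_1)$, set $S^* = I_{12}(S)$ and $S' = I_{21}(S^*)$, and aim at $S' = S$. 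The first step is to note, as in the proof of Proposition \ref{bijection} (via \cite[Thm 3.1]{csiszar_shields}), that an I-projection between geometrically equivalent polytopes preserves supports; hence $S$, $S^*$, $S'$ share one support $\Sigma$, and in particular every divergence among them is finite.

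The engine is the identity of \cite[Thm 3.2]{csiszar_shields}, used twice. Since $S^*$ is the I-projection of $S$ onto the linear family $\{ T \in \mathcal{C}(P_2,Q_2) : \supp(T) \subseteq \Sigma \}$, we have $D(T||S) = D(S^*||S) + D(T||S^*)$ for every such $T$; since $S'$ is the I-projection of $S^*$ onto $\{ R \in \mathcal{C}(P_1,Q_1) : \supp(R) \subseteq \Sigma \}$, we have $D(R||S^*) = D(S'||S^*) + D(R||S')$ for every such $R$. In the first identity I would substitute $T = S^* + \delta(S' - S)$, which is an admissible test distribution for all small enough $\delta > 0$ exactly by the perturbation argument used for injectivity in Proposition \ref{bijection} (the increment $S' - S$ has zero marginals and is supported in $\Sigma$, while $S^*$ is strictly positive on $\Sigma$); in the second identity I would substitute $R = S$. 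After cancelling the constant terms, dividing out $\delta$, and writing the divergences out, the two substitutions give, respectively, $\sum_{(i,j) \in \Sigma} (S'(i,j) - S(i,j)) \log \frac{S^*(i,j)}{S(i,j)} = 0$ and $\sum_{(i,j) \in \Sigma} (S'(i,j) - S(i,j)) \log \frac{S'(i,j)}{S^*(i,j)} = 0$.

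Adding these two equalities makes the $S^*$ terms disappear and leaves $\sum_{(i,j) \in \Sigma} (S'(i,j) - S(i,j)) \log \frac{S'(i,j)}{S(i,j)} = 0$. Splitting the sum and using $\sum_{(i,j) \in \Sigma} S(i,j) \log \frac{S'(i,j)}{S(i,j)} = -D(S||S')$, the left-hand side equals $D(S'||S) + D(S||S')$; so this sum of two nonnegative quantities is zero, both summands vanish, and $S' = S$, which is what we wanted.

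I do not expect a real obstacle: once the supports are pinned down the proof is a two-line manipulation of the Pythagorean identity. The points that need a little care are the feasibility of the test distribution $S^* + \delta(S' - S)$ (nonnegativity on $\Sigma$, unchanged marginals, support not enlarged) and the bookkeeping showing that the cancellation produces the \emph{sum} $D(S'||S) + D(S||S')$ rather than a difference --- it is precisely the asymmetry of $D(\cdot||\cdot)$, one copy read forward and one backward, that is responsible for this, which is the slightly counterintuitive feature the preamble alludes to.
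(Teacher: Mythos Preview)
Your argument is correct. The two applications of the Pythagorean identity, together with the perturbation $T = S^* + \delta(S'-S)$ in the first and $R = S$ in the second, do yield
\[
\sum_{(i,j)\in\Sigma} (S'(i,j)-S(i,j))\log\frac{S^*(i,j)}{S(i,j)} = 0,
\qquad
\sum_{(i,j)\in\Sigma} (S'(i,j)-S(i,j))\log\frac{S'(i,j)}{S^*(i,j)} = 0,
\]
and their sum is exactly $D(S'\|S)+D(S\|S')$, forcing $S'=S$.

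The paper, however, does not carry out this computation. It observes that $\mathcal{C}(P_1,Q_1)$ and $\mathcal{C}(P_2,Q_2)$ are \emph{translates} of one another as linear families (same defining functions $f_k$, different right-hand sides $\alpha_k$) and then invokes \cite[Lemma~4.2]{csiszar_shields}, which says that for translates the I-projections of $S$ and of $S^*=I_{12}(S)$ onto $\mathcal{C}(P_1,Q_1)$ coincide; since the projection of $S$ onto its own family is $S$ itself, $I_{21}(S^*)=S$ follows in one line. Your proof is in effect a self-contained reproof of (the relevant case of) that lemma, written in the same perturbative style as the injectivity half of Proposition~\ref{bijection}. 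The paper's route is shorter because it outsources the work; yours is more explicit and makes visible why the asymmetry of $D(\cdot\|\cdot)$ does not obstruct the inversion --- the two Pythagorean identities contribute one divergence in each direction, and it is their \emph{sum} that vanishes.
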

\begin{proof}
 The linear families $ {\mathcal C}(P_1,Q_1) $ and $ {\mathcal C}(P_2,Q_2) $
are translates\footnote{\,Linear families are translates of each other if they
are defined by the same functions $ f_k $ but different numbers $ \alpha_k $.}
of each other in the sense of \cite{csiszar_shields}.
Let $ S \in {\mathcal C}(P_1,Q_1) $ and let $ S^* $ be its I-projection onto
$ {\mathcal C}(P_2,Q_2) $.
By \cite[Lemma 4.2]{csiszar_shields}, the I-projections of $ S $ and $ S^* $
onto $ {\mathcal C}(P_1,Q_1) $ must be identical, and this is trivially $ S $.
(Apart from being translates of each other, the additional condition of
\cite[Lemma 4.2]{csiszar_shields} dealing with supports is also satisfied due to
geometric equivalence of $ {\mathcal C}(P_1,Q_1) $ and $ {\mathcal C}(P_2,Q_2) $.)
\end{proof}

We conclude the paper by illustrating that the converse of Theorem
\ref{homeomorphism} does not hold.
The following example exhibits two transportation polytopes that are not
geometrically equivalent, but are homeomorphic under information projection.

\begin{example}
 Let $ P_1 = (1/2, 1/2) $, $ Q_1 = (1/3, 2/3) $, and $ P_2 = Q_2 =  (1/2, 1/2) $.
Both $ {\mathcal C}(P_1,Q_1) $ and $ {\mathcal C}(P_2,Q_2) $ are one-dimensional
polytopes, but clearly not geometrically equivalent because their vertices are:
\begin{equation}
 U_1 = \begin{pmatrix}
        1/3  &  1/6  \\
        0    &  1/2
       \end{pmatrix} ,
\quad
 U_2 = \begin{pmatrix}
        0    &  1/2  \\
        1/3  &  1/6
       \end{pmatrix}
\end{equation}
for $ {\mathcal C}(P_1,Q_1) $ and
\begin{equation}
 V_1 = \begin{pmatrix}
        1/2  &  0  \\
        0    &  1/2
       \end{pmatrix} ,
\quad
 V_2 = \begin{pmatrix}
        0    &  1/2  \\
        1/2  &  0
       \end{pmatrix}
\end{equation}
for $ {\mathcal C}(P_2,Q_2) $.
Let $ I_\text{proj} $ denote the I-projection from $ {\mathcal C}(P_1,Q_1) $ to
$ {\mathcal C}(P_2,Q_2) $, as before.
$ I_\text{proj} $ is continuous by Proposition \ref{continuous}.
By using \cite[Lemma 4.2]{csiszar_shields} in the same way as in Proposition
\ref{inverse}, one can show that it is bijective over the interior of
$ {\mathcal C}(P_1,Q_1) $ (which consists of distributions from $ {\mathcal C}(P_1,Q_1) $
having full support), and that its inverse over this domain is precisely the
I-projection from $ {\mathcal C}(P_2,Q_2) $ to $ {\mathcal C}(P_1,Q_1) $.
Since $ I_\text{proj}(U_i) = V_i $, $ i \in \{1, 2\} $, $ I_\text{proj} $ is
bijective over the entire $ {\mathcal C}(P_1,Q_1) $, and hence it is a
homeomorphism.
Its inverse is guaranteed to be continuous by \cite[Thm 7.8, Ch I]{bredon},
but note that this inverse is \emph{not} the I-projection from $ {\mathcal C}(P_2,Q_2) $
to $ {\mathcal C}(P_1,Q_1) $ because the I-projection of $ V_i $ onto
$ {\mathcal C}(P_1,Q_1) $ is undefined.
\hfill $ \blacktriangle $%
\end{example}

\section*{Acknowledgment}

This work was supported by the Ministry of Education, Science and Technological
Development of the Republic of Serbia (grants TR32040 and III44003).

\bibliographystyle{amsplain}

\begin{thebibliography}{15}

\bibitem{bredon}
   G. E. Bredon,
   \emph{Topology and Geometry},
   Springer-Verlag, 1993.
\bibitem{brualdi}
   R. A. Brualdi, 
   \emph{Combinatorial Matrix Classes}, 
   Cambridge University Press, 2006.
\bibitem{chentsov68}
   N. N. Chentsov,
   ``A Nonsymmetric Distance Between Probability Distributions, Entropy and the Pythagorean Theorem,''
   \emph{Math. Notes}, vol. 4, no. 3, pp. 686-691, Sept. 1968.
\bibitem{chentsov}
   N. N. Chentsov,
   \emph{Statistical Decision Rules and Optimal Inference},
   (in Russian). Providence, RI: Translations of Mathematical Monographs, Amer. Math. Soc., 1982.
   Original publication: Moscow, U.S.S.R.: Nauka, 1972.
\bibitem{csiszarI}
   I. Csisz\'ar,
   ``I-Divergence Geometry of Probability Distributions and Minimization Problems,"
   \emph{Ann. Probab.}, vol. 3, no. 1, pp. 146--158, 1975.
\bibitem{csiszar_korner}
   I. Csisz\'ar and J. K\"orner, 
   \emph{Information Theory: Coding Theorems for Discrete Memoryless Systems}, 
   Academic Press, Inc., 1981. 
\bibitem{csiszar_shields}
   I. Csisz\'ar and P. Shields,
   ``Information Theory and Statistics: A Tutorial,"
   \emph{Foundations and Trends in Communications and Information Theory}, vol. 1, no. 4, pp. 417--528, Dec. 2004.
\bibitem{csiszar_matus}
   I. Csisz\'ar and F. Mat\'u\v{s},
   ``Information Projections Revisited,"
   \emph{IEEE Trans. Inform. Theory}, vol. 49, no. 6, pp. 1474--1490, June 2003.
\bibitem{deloera}
   J. A. De Loera and E. D. Kim,
   ``Combinatorics and Geometry of Transportation Polytopes: An Update,"
   In \emph{Discrete Geometry and Algebraic Combinatorics}, vol. 625 of \emph{Contemporary Mathematics}, pp. 37--76, American Mathematical Society, Providence, RI, 2014.
\bibitem{cont}
   C. Gietl and F. P. Reffel,
   ``Continuity of f-projections on Discrete Spaces,"
   in \emph{Geometric Science of Information, Lecture Notes in Computer Science}, vol. 8085, pp 519--524, 2013.
\bibitem{klee}
   V. Klee and C. Witzgall,
   ``Facets and vertices of transportation polytopes,''
   In: Mathematics of the Decision Sciences, Part I (Stanford, CA, 1967), 257--282, AMS, Providence, RI, 1968.
\bibitem{kovacevic}
   M. Kova\v{c}evi\' c, I. Stanojevi\' c, and V. \v{S}enk, 
   ``On the Entropy of Couplings," 
   \emph{Inform. and Comput.}, vol. 242, pp. 369-382, June 2015. 
\bibitem{polyanskiy}
   Y. Polyanskiy,
   ``Hypothesis Testing via a Comparator,"
   in \emph{Proc. 2012 IEEE Int. Symp. Inf. Theory (ISIT)}, pp. 2206--2210, Cambridge, MA, July 2012.
\bibitem{fixed}
   L. R\"uschendorf, B. Schweizer, and M. D. Taylor (Editors), 
   \emph{Distributions with Fixed Marginals and Related Topics}, 
   Lecture Notes - Monograph Series, Institute of Mathematical Statistics, 1996. 
\bibitem{topsoe}
   F. Tops\o e,
   ``Information Theoretical Optimization Techniques,"
   \emph{Kybernetika}, vol. 15, no. 1, pp. 8--27, 1979.
%
\end{thebibliography}

\end{document}